\newcommand{\mbf}{\mathbf}
\newcommand{\mcl}{\mathcal}
\newcommand{\und}{\underline}
\newtheorem{theorem}{Theorem}
\newtheorem{lemma}{Lemma}
\theoremstyle{definition}
\newtheorem{definition}{Definition}
\newtheorem{example}{Example}
\begin{document}


\title{Minimum Cost Multicast with Decentralized Sources}
\author{
\IEEEauthorblockN{Nebojsa Milosavljevic, Sameer Pawar, Salim El Rouayheb, Michael Gastpar and Kannan Ramchandran}
\thanks{N. Milosavljevic, S. Pawar, M. Gastpar and K. Ramchandran are with the Department of Electrical Engineering and Computer Science, University
        of California, Berkeley, Berkeley, CA 94720 USA (e-mail:\{nebojsa,spawar, gastpar, kannanr\}@eecs.berkeley.edu).}
\thanks{S. El Rouayheb is with the Department of Electrical Engineering, Princeton University, Princeton, NJ 08544 USA (e-mail: salim@princeton.edu).}
\thanks{M. Gastpar is also with the School of Computer and Communication Sciences, EPFL, Lausanne, Switzerland (e-mail: michael.gastpar@epfl.ch).}

\thanks{This research was funded by the NSF grants
(CCF-0964018, CCF-0830788), a DTRA grant (HDTRA1-09-1-0032), and in part by an
AFOSR grant (FA9550-09-1-0120).}
}


\maketitle


\begin{abstract}
In this paper we study the multisource multicast problem where every sink in a given directed acyclic graph is a client and is interested in a common file.
We consider the case where each node can have partial knowledge about the file as a side information.
Assuming that nodes can communicate over the capacity constrained links of the graph,
the goal is for each client to gain access to the file, while minimizing some linear cost function of
number of bits transmitted in the network. We consider three types of side-information settings: 
(ii) side information in the  form of
linearly correlated packets; and (iii) 
the general setting where the side information at the nodes have an arbitrary (i.i.d.) correlation structure.
In this work we 1) provide a polynomial time feasibility test, i.e., whether or not all the clients can recover the file, and
2) we provide a polynomial-time algorithm that finds the optimal rate allocation among the links of the graph,
and then determines an explicit transmission scheme for cases (i) and (ii).
\end{abstract} 
\section{Introduction}\label{sec:intro}
We consider a multi-source multicast problem, over a given network topology with capacity constrained links. There are two types of nodes in the network; {\em clients} that are interested in recovering the whole content, and source nodes that may posses possibly correlated side-information. To further illustrate the problem set-up consider the following example.



\begin{figure}[h]
\begin{center}
\includegraphics[scale=0.5]{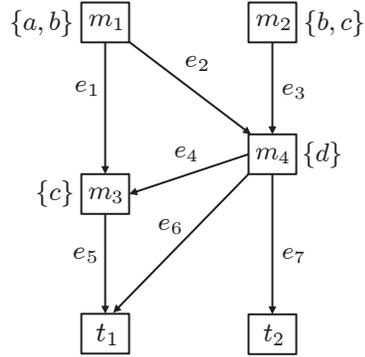}
\end{center}
\vspace{-0.2in}
\caption{An example of the multisource multicast problem, where nodes $m_1,\ldots,m_4$ observe the subsets of the file packets
         $\{a,b,c,d\}$ as shown above. Assuming that nodes can communicate reliably over the capacity constrained links,
         the goal is for the clients $t_1$ and $t_2$ (sinks of the graph) to gain access to the entire file
         while minimizing the communication cost.}\label{fig:twousers}
\end{figure}

A file consists of four equally sized packets
$a$, $b$, $c$ and $d$ belonging to some finite field $\mathbb{F}_{q^n}$.
Also, suppose that the data packets are distributed across the nodes, $m_1$ through $m_4$, that are connected as shown in Figure~\ref{fig:twousers}. The clients denoted by $t_1$ and $t_2$ are interested in recovering the entire file. The edges in the graph are denoted by $e_1,\ldots,e_7$ as shown in Figure~\ref{fig:twousers}. The objective is to minimize some function of the communication cost such that the clients $t_1$ and $t_2$ can recover the entire file. For instance, it can be shown that the following coding scheme minimizes the total number of bits communicated: node~$m_1$ transmits $a$ on link~$e_2$, node~$m_2$
transmits $b,c$ on link~$e_3$, node~$m_3$ transmits $c$ on link~$e_5$, node~$m_4$ transmits $a,b,d$ on link~$e_6$ and $a,b,c,d$
on link~$e_7$.


Note that the example above considers a simple form of the side-information, where different nodes observe partial uncoded
or ``raw'' data packets of the original file. Another important special case of side-information is when
nodes observe linear combinations of the data packets of the original file. In a more general setting the side-information can be of more complex form \emph{i.e.}, arbitrary correlations.

The multisource multicast problem was originally studied by Ho, \emph{et al.}~\cite{HKMESK06}, where for the linearly coded packets the authors
showed under what conditions it is possible to recover the file at all destinations.
For the case of uncoded packets, it is easy to show that one can add a super source as in~\cite{tajbakhsh2011generalized} to the network and then using results from~\cite{lun2006minimum}, find an optimal solution that minimizes the communication cost. In \cite{CXW10,gonen2012coded} the authors considered
a related problem over an undirected graph where all the nodes are interested in recovering the complete file. In \cite{CXW10} it was shown that the problem is NP-hard, while an approximate solution is provided in \cite{gonen2012coded}. In~\cite{haeupler2012network}, Haeupler \emph{et al.} analyzed gossip based protocols in networks where each node observes correlated data.

In this paper, we  make the following contributions.
\begin{itemize}
\item In the case of most general scenario of arbitrarily correlated side information,
we provide conditions as well as a polynomial time algorithm to determine when a multisource multicast problem over a given\texttt{} directed acyclic graph (DAG) is feasible.
\item Using \emph{submodular flow} techniques, we provide a \emph{deterministic polynomial time}
algorithm to find number of bits each node should transmit in order to recover the file at all the clients and be optimal
w.r.t. any linear cost function\footnote{Linear
cost function is defined w.r.t. the number of bits transmitted on each link.}.
\item For the special case of linearly correlated side information we provide an optimal communication scheme based
on the algebraic network coding framework.
\end{itemize}

\section{System Model and Preliminaries}\label{sec:model}

In this work we represent the network by a directed acyclic graph $G=(\mcl{V},\mcl{E})$, where $\mcl{V}$ is the set of nodes,
and $\mcl{E}$ is the set of links that have capacity constraints. We define the capacity function $c:\mcl{E} \rightarrow \mathbb{R}$ 
to denote the maximum number of bits that can be transmitted over a given link. We distinguish between two types of nodes:
1) source nodes $\mcl{M}=\{m_1,m_2,\ldots,m_l\}$ that have partial information about the file, and  
2) clients $\mcl{T}=\{t_1,t_2,\ldots,t_k\}$ which are interested in recovering the file, and are sinks in the graph $G$.
Let $X_{m_1},X_{m_2},\ldots,X_{m_l}$, denote the components of a discrete
memoryless multiple source (DMMS) with a given joint probability mass function. Each source node $m_i \in \mcl{M}$ observes $n$ i.i.d.
realizations of the corresponding random variable $X_{m_i}$, denoted by $X_{m_i}^n$.
We note that the results of this paper can be applied in a straightforward manner when the clients have side information as well.
For the sake brevity, we focus on the case when clients have no side information.

The goal is for each client in $\mcl{T}$ to gain access to all source nodes' observations, \emph{i.e.}, to download the file.
In order to achieve this goal, each node $m_i \in \mcl{M}$ is allowed to send information
across the graph $G$ at rate which is limited by the capacity of the outgoing links of that node.
Transmission of each source node is a function of its own initial observation and all information it receives
from its neighbors. Let us denote transmission on the link $e=(m_i,m_j) \in \mcl{E}$ by
\begin{align}
F_e = f_e\left(X_{m_i}^n, \left\{ F_a : \forall m_r,~\text{s.t.}~a=(m_r,m_i) \in \mcl{E} \right\}\right),
\end{align}
where $f_{e}(\cdot)$ is a mapping of the observations $X^n_{m_i}$ and transmissions received from
the neighbors of $m_i$, $\{m_r : (m_r,m_i) \in \mcl{E} \}$ to an outgoing message on the link $e$.

We denote by $\mcl{M}_{t_i} \subseteq \mcl{M}$ the set of source nodes
which are connected to the client $t_i \in \mcl{T}$. In other words, there exists a path in graph $G$ from every node
in $\mcl{M}_{t_i}$ to the client $t_i \in \mcl{T}$. Consequently, we define the graph $G_{t_i}=(\mcl{V}_{t_i},\mcl{E}_{t_i})$ to be a
subgraph of $G$, where $\mcl{V}_{t_i}=\{\mcl{M}_{t_i},t_i\}$, and $\mcl{E}_{t_i} \subseteq \mcl{E}$ is a set of links that
connects all nodes in $\mcl{M}_{t_i}$ among themselves and with client $t_i$. Furthermore, we assume that
\begin{align}
H\left(X_{\mcl{M}_{t_1}}\right) = \cdots = H\left(X_{\mcl{M}_{t_k}}\right) = H\left(X_{\mcl{M}}\right), \label{cond1}
\end{align}
where $X_{\mcl{M}_{t_i}}\triangleq\left(X_{m_j} : m_j \in \mcl{M}_{t_i} \right)$, and $X_{\mcl{M}}\triangleq \left(X_{m_1},\ldots,X_{m_l}\right)$.
Equality~\eqref{cond1} ensures that every client in the network can potentially gain access to the entire process $X_{\mcl{M}}$.

For each client $t_i \in \mcl{T}$ to learn the file, transmissions $Fe$, $\forall e \in \mcl{E}$, must satisfy,
\begin{align}
\lim_{n \rightarrow \infty} \frac{1}{n} H\left(X_{\mcl{M}}^n| \{F_{e}\}_{e=(m_j,t_i)\in \mcl{E}}\right) = 0,~~~\forall t_i \in \mcl{T}. \label{eq:decode}
\end{align}

\begin{definition}
A rate tuple $\mbf{R}=(R_e : e \in \mcl{E})$ is an {\em achievable multisource multicast (MM) rate vector} if there exists a communication scheme with transmitted messages $\mbf{F}=(F_e : e\in \mcl{E})$ that satisfies \eqref{eq:decode}, and is such that
\begin{align}
R_e = \lim_{n \rightarrow \infty} \frac{1}{n} H(F_e),~~~\forall e \in \mcl{E},
\end{align}
where  $R_e\leq c_e$, $\forall e \in \mcl{E}$.
\end{definition}

In this work, we design a polynomial time algorithm for the multisource multicast problem that minimizes
the linear cost function $\sum_{e \in \mcl{E}} \alpha_e R_e$, where
$\und{\alpha} \triangleq (\alpha_e : e\in \mcl{E}),~0 < \alpha_e < \infty$, $\forall e \in \mcl{E}$, is a vector of non-negative finite weights.
We allow $\alpha_e$'s to be arbitrary non-negative constants, to account for the case when communication
across some group of links in $G$ is more expensive compared to the others. 
Thus, the problem can be formulated as:
\begin{align}
\min_{\mbf{R}} \sum_{e \in \mcl{E}} \alpha_e R_e,~\text{s.t. $\mbf{R}$ is an achievable \emph{MM}-rate vector.} \label{problem1}
\end{align}

\subsection{Finite Linear Source Model}

Now, we briefly describe a special case of a DMMS called the finite linear source model~\cite{CZ10}.
Let $q$ be some power of a prime. Consider the $N$-dimensional random vector $\mathbf{W} \in \mathbb{F}^N_{q^n}$
whose components are independent and uniformly distributed over the elements of $\mathbb{F}_{q^n}.$
Then, in the linear source model, the observations of the nodes $m_i \in \mcl{M}$ is simply given by
\begin{align}
\mathbf{X}_{m_i} = \mathbf{A}_{m_i} \mbf{W},~~\ m_i \in \mcl{M}, \label{model:eq1}
\end{align}
where $\mathbf{A}_{m_i} \in \mathbb{F}_{q}^{\ell_i \times N}$ is the observation matrix of node~$m_i$.

It is easy to verify that for the finite linear source model,
\begin{align}
\frac{H(X_{m_i})}{\log q^n} = \text{rank}(\mbf{A}_{m_i}). \label{rank_entropy}
\end{align}
For the finite linear source model, besides the optimal \emph{MM}-rate vector, we provide a polynomial time code construction based on the algebraic network coding approach~\cite{KM03}.  
\section{Multisource Multicast Rate-Flow Region} \label{sec:flowrate}
In order to solve the optimization problem in~\eqref{problem1} we first establish a region called a ``rate-flow region'' that contains all possible optimal rate allocations. To identify this rate-flow region for our example of Figure~\ref{fig:twousers}, in the case of arbitrarily correlated side-information at the source nodes, we start by considering a single client $t_1$. Next, we isolate the subgraph $G_{t_1}=(\mcl{V}_{t_1},\mcl{E}_{t_1})$ corresponding to $t_1$ and modify its link capacities to infinity as shown in Figure~\ref{fig:snguser}.

%
%
%

\begin{figure}[h]
\begin{center}
\includegraphics[scale=0.5]{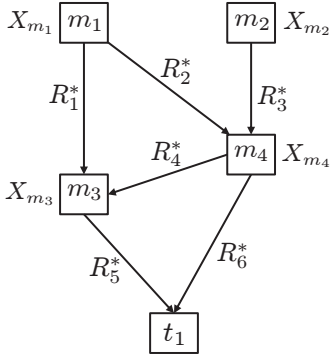}
\end{center}
\vspace{-0.2in}
\caption{Single client multisource multicast problem over graph $G_{t_1}=(\mcl{V}_{t_1},\mcl{E}_{t_1})$ derived from
         the graph $G(\mcl{V},\mcl{E})$ of Figure~\ref{fig:twousers}, for the case of arbitrarily correlated side-information at the source nodes.}\label{fig:snguser}
\end{figure}

Suppose the optimal solution w.r.t. problem~\eqref{problem1} is achieved by $\mbf{R}^{*}=(R^{*}_1,\ldots,R^{*}_6)$.
Then, it follows that transmissions of node~$m_2$ have to satisfy
\begin{align}
R^{*}_3&\geq H(X_{m_2}|X_{m_1},X_{m_3},X_{m_4}), \\
R^{*}_1+R^{*}_2+R^{*}_3 &\geq H(X_{m_1},X_{m_2}|X_{m_3},X_{m_4}). \nonumber
\end{align}
Let us now consider node~$m_4$. Its transmission includes information received from nodes $m_1$ and $m_2$
combined with its own side information. Since the goal is to minimize
the total communication cost, it follows that for the optimal \emph{MM}-rate vector $\mbf{R}^{*}$,
transmission of nodes $m_1$ and $m_2$ cannot be further compressed  at node~$m_4$. Therefore, the transmission of node~$m_4$
consists of 2 components: 1) routed information from nodes~$m_1$ and $m_2$, and 2)Innovative side-information at node
$m_4$ w.r.t. all other source nodes in the network.
Hence, $\mbf{R}^{*}$ must satisfy
\begin{align}
R^{*}_4+R^{*}_6-R^{*}_2-R^{*}_3\geq H(X_{m_4}|X_{m_1},X_{m_2},X_{m_3}).
\end{align}
In order for client $t_1$ to recover the file, \emph{i.e.}, to gain access to $X_{\mcl{M}_{t_1}}$, the incoming links to $t_1$ necessarily
have to carry entire information about the process. In other words
\begin{align}
R^{*}_5+R^{*}_6 = H(X_{\mcl{M}_{t_1}}),
\end{align}
where the equality sign comes from the fact that the goal is to minimize the overall communication cost, and thus,
it is wasteful for client $t_1$ to receive at rate larger than the joint entropy of the process.

Considering all possible subsets of the source node set $\mcl{M}_{t_1}$, we have that an optimal \emph{MM}-rate vector $\mbf{R}^{*}$
must belong to the following rate-flow region
\begin{align}
\partial \mcl{R}_{t_1} &= \{ \partial \mbf{R} : \partial R(\mcl{S})\geq H(X_{\mcl{S}}|X_{\mcl{M}_{t_1} \setminus \mcl{S}}),~\forall \mcl{S}\subset \mcl{M}_{t_1}, \nonumber \\
                 &~~~~~~~~~~~~\partial R(\mcl{M}_{t_1}) = H(X_{\mcl{M}_{t_1}})\}, \label{flow_region}
\end{align}
where
\begin{align}
\partial R(\mcl{S}) \triangleq \sum_{e \in \Delta^{+} \mcl{S}} R_e - \sum_{e \in \Delta^{-} \mcl{S}} R_e,
\end{align}
and
$\Delta^{+}\mcl{S}\subseteq \mcl{E}_{t_1},~(\Delta^{-}\mcl{S}\subseteq \mcl{E}_{t_1})$ denotes the set of links leaving (entering) $\mcl{S}$. For instance, if $\mcl{S}=\{m_3,m_4\}$, then the optimal rate vector $\mbf{R}^{*}$ satisfies
\begin{align}
\partial R^{*}(S)&=R^{*}_5+R^{*}_6-R^{*}_1-R^{*}_2-R^{*}_3 \nonumber \\
             &\geq H(X_{m_3},X_{m_4}|X_{m_1},X_{m_2}).
\end{align}

It can be verified that any rate vector that belongs to the rate-flow region $\partial \mcl{R}_{t_1}$
can be achieved using multi-terminal Slepian-Wolf random-binning scheme~\cite{CT06}. Thus, the rate-flow region $\partial \mcl{R}_{t_1}$
contains all optimal \emph{MM}-rate vectors w.r.t. the optimization problem~\eqref{problem1}.

Extension of this result to a multiple client case is straightforward: an optimal \emph{MM}-rate vector has to
simultaneously belong to all rate-flow regions $\partial \mcl{R}_{t_i}$ which correspond to the graph $G_{t_i}$, $\forall t_i \in \mcl{T}$.
Hence, the optimization problem~\eqref{problem1} can be written as
\begin{align}
&\min_{\mbf{R}} \sum_{e \in \mcl{E}} \alpha_e R_e, \label{problem2}  \\
&~~~~~~\text{s.t.}~\partial \mbf{R} \in \partial \mcl{R}_{t_1} \cap \partial \mcl{R}_{t_2} \cap \cdots \cap \partial \mcl{R}_{t_k}, \nonumber \\
&~~~~~~~~~~~~R_e\leq c_e,~\forall e \in \mcl{E}. \nonumber
\end{align}

Before we address the question of efficiently solving the problem~\eqref{problem2}, first we need to answer whether
or not the problem is feasible.

\section{Feasibility of the Multisource Multicast Problem} \label{sec:feasible}

As in Section~\ref{sec:flowrate}, first, we consider a single client case, \emph{i.e.}, when $\mcl{T}=\{t_1\}$. Then, the obtained result
naturally extends to the setting with arbitrary number of clients.
Before we go any further,
let us introduce some concepts from the combinatorial optimization theory which will turn out to be useful in proving our results.
The set function $f:2^{\mcl{M}_{t_1}}$ is supermodular if
\begin{align}
f(\mcl{S})+f(\mcl{T})\leq f(\mcl{S} \cup \mcl{T}) + f(\mcl{S} \cap \mcl{T}),~\forall \mcl{S},\mcl{T}\ \subseteq \mcl{M}_{t_1}. \label{supermod}
\end{align}
If the inequality sign in~\eqref{supermod} is reversed, then the function $f$ is called submodular.
Let us define the \emph{polyhedron} $P(f)$ and the \emph{base polyhedron} $B(f)$ of a supermodular function $f$ as follows.
\begin{align}
P(f) & \triangleq \{\mbf{Z}~|~\mbf{Z} \in \mathbb{R}^m,~\forall \mcl{S}\subseteq \mcl{M}_{t_1} : Z(\mcl{S})\geq f(\mcl{S}) \}, \label{f:poyh} \\
B(f) & \triangleq \{\mbf{Z}~|~\mbf{Z} \in P(f),~Z(\mcl{M})=f(\mcl{M})_{t_1}\} \label{f:base},
\end{align}
where  $Z(\mcl{S})=\sum_{i \in \mcl{S}} Z_i$. Analogously, we define the polyhedron and the base polyhedron of
a submodular function (the inequality signs in~\eqref{f:poyh} and~\eqref{f:base} are reversed).

It is easy to show that function
\begin{align}
g_{t_1}(\mcl{S}) = H(X_{\mcl{S}}|X_{\mcl{M}_{t_1} \setminus \mcl{S}}),~\forall \mcl{S}\subseteq \mcl{M}_{t_1} \label{fcn:gt}
\end{align}
is supermodular. Hence, the rate-flow region $\partial \mcl{R}_{t_1}$ defined in~\eqref{flow_region} represents the base polyhedron of the function $g_{t_1}$.

\begin{lemma}\label{lm:feasible}
For a single client multisource multicast problem over $G_{t_1}=(\mcl{V}_{t_1},\mcl{E}_{t_1})$, where $\mcl{V}_{t_1}=\{\mcl{M}_{t_1},t_1\}$,
there exists an achievable MM-rate vector, \emph{i.e.} $\partial \mcl{R}_{t_1} \neq \emptyset$, and
$R_e\leq c_e$, $\forall e \in \mcl{E}_{t_1}$, if and only if
\begin{align}
c(\Delta^{+} \mcl{S})\geq H(X_{\mcl{S}}|X_{\mcl{M}_{t_1} \setminus \mcl{S}}),~~\forall \mcl{S}\subseteq \mcl{M}_{t_1}, \label{ineq:feasible}
\end{align}
where
\begin{align}
c(\Delta^{+} \mcl{S}) = \sum_{e \in \Delta^{+} \mcl{S}} c_e,~~ \Delta^{+} \mcl{S} \in \mcl{E}_{t_1}. \nonumber
\end{align}
\end{lemma}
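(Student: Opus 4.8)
The plan is to prove the two implications separately, treating necessity as immediate and concentrating effort on the converse. For \emph{necessity}, suppose $\mbf{R}$ is achievable with $R_e\le c_e$ and $\partial\mbf{R}\in\partial\mcl{R}_{t_1}$. Fix any $\mcl{S}\subseteq\mcl{M}_{t_1}$. The definition \eqref{flow_region} of the rate-flow region gives $\partial R(\mcl{S})\ge H(X_{\mcl{S}}|X_{\mcl{M}_{t_1}\setminus\mcl{S}})$ (with equality at $\mcl{S}=\mcl{M}_{t_1}$), so, using that all rates are nonnegative to discard the incoming term, I would chain $H(X_{\mcl{S}}|X_{\mcl{M}_{t_1}\setminus\mcl{S}})\le\partial R(\mcl{S})=\sum_{e\in\Delta^{+}\mcl{S}}R_e-\sum_{e\in\Delta^{-}\mcl{S}}R_e\le\sum_{e\in\Delta^{+}\mcl{S}}R_e\le c(\Delta^{+}\mcl{S})$. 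This is exactly \eqref{ineq:feasible}.

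For \emph{sufficiency} I would first recast the problem in flow language. Writing $z_i=\partial R(\{m_i\})$ for the net out-flow at each source node, the boundary map is additive, $\partial R(\mcl{S})=\sum_{i\in\mcl{S}}z_i$, since every edge internal to $\mcl{S}$ contributes with opposite signs at its endpoints. Hence $\partial\mbf{R}\in\partial\mcl{R}_{t_1}$ is precisely the statement $\mbf{z}\in B(g_{t_1})$, the base polyhedron of the supermodular $g_{t_1}$. On the other side, by Gale's feasibility theorem for flows with prescribed node imbalances, a vector $\mbf{z}$ (together with $z_{t_1}=-H(X_{\mcl{M}_{t_1}})$ at the sink) is realizable by some flow with $0\le R_e\le c_e$ if and only if $z(\mcl{U})\le c(\Delta^{+}\mcl{U})$ for every node set $\mcl{U}$; the constraints for $\mcl{U}\ni t_1$ are automatic on $B(g_{t_1})$, where $z(\cdot)\ge 0$, so the binding ones are $z(\mcl{S})\le c(\Delta^{+}\mcl{S})$ for $\mcl{S}\subseteq\mcl{M}_{t_1}$. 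Thus feasibility reduces to the existence of a single $\mbf{z}$ with $g_{t_1}(\mcl{S})\le z(\mcl{S})\le c(\Delta^{+}\mcl{S})$ for all $\mcl{S}\subset\mcl{M}_{t_1}$ and $z(\mcl{M}_{t_1})=H(X_{\mcl{M}_{t_1}})$.

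The core step is to show such a sandwiched vector exists exactly when \eqref{ineq:feasible} holds. Here $g_{t_1}$ is supermodular, $\kappa(\mcl{S}):=c(\Delta^{+}\mcl{S})$ is submodular (the cut-capacity of a set is a standard submodular function), and $g_{t_1}(\emptyset)=\kappa(\emptyset)=0$. Rewriting membership in $B(g_{t_1})$ through the complementary submodular function $\bar g(\mcl{S}):=g_{t_1}(\mcl{M}_{t_1})-g_{t_1}(\mcl{M}_{t_1}\setminus\mcl{S})=H(X_{\mcl{S}})$, the requirement becomes $\mbf{z}\in B(\bar g)\cap P(\kappa)$, a base of the entropy polymatroid lying under the cut-capacity polymatroid. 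I would then invoke the intersection theorem for a base polyhedron and a submodular polyhedron, whose min-max form gives $B(\bar g)\cap P(\kappa)\ne\emptyset$ iff $\bar g(\mcl{M}_{t_1})\le\bar g(\mcl{M}_{t_1}\setminus\mcl{S})+\kappa(\mcl{S})$ for all $\mcl{S}$. Substituting, this reads $H(X_{\mcl{M}_{t_1}})-H(X_{\mcl{M}_{t_1}\setminus\mcl{S}})=H(X_{\mcl{S}}|X_{\mcl{M}_{t_1}\setminus\mcl{S}})\le c(\Delta^{+}\mcl{S})$, which is exactly \eqref{ineq:feasible}. Any such $\mbf{z}$ lies in $\partial\mcl{R}_{t_1}$ and, by the Slepian--Wolf binning argument noted after \eqref{flow_region}, is an achievable MM-rate vector; decomposing the realizing flow returns explicit per-edge rates with $0\le R_e\le c_e$.

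The main obstacle is this core step: proving that the supermodular lower bounds inherited from $B(g_{t_1})$ and the submodular upper bounds from the cut capacities can be satisfied \emph{simultaneously} by one vector, not merely set by set. This is precisely where supermodularity of $g_{t_1}$ and submodularity of $\kappa$ are indispensable, as they are what close the min-max gap and let the single pointwise condition \eqref{ineq:feasible} suffice; the uncrossing underlying the intersection theorem is the real work. Everything else — necessity, additivity of the boundary map, and the reduction via Gale's theorem — is routine.
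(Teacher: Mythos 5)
Your proof is correct and follows essentially the same route as the paper's: both convert the boundary/flow condition on $G_{t_1}$ into a cut-capacity statement (your appeal to Gale's feasibility theorem is the paper's citation of Hoffman) and then reduce existence to the intersection of the supermodular entropy polyhedron with the submodular cut-capacity polyhedron, settled by the common-base/discrete-sandwich theorem. The only differences are minor refinements on your side: you give an explicit necessity argument, and by intersecting $B(\bar g)$ with the polyhedron $P(\kappa)$ rather than two base polyhedra you avoid the paper's ``without loss of generality'' normalization of the capacities of the links entering $t_1$.
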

\begin{proof}
As we discussed in Section~\ref{sec:flowrate}, the incoming links to $t_1$ carry entire information
about the process. This combined with the fact that the goal is to minimize the communication cost,
implies that for any optimal \emph{MM}-rate vector $\mbf{R}^{*}$ it holds that
\begin{align}
\sum_{e=(m_j,t_1)\in \mcl{E}_{t_1}} R^{*}_e = H(X_{\mcl{M}_{t_1}}). \label{rate_eq1}
\end{align}
Therefore, without loss of generality we can assume that the capacities of the links incoming to $t_1$ satisfy
\begin{align}
\sum_{e=(m_j,t_1)\in \mcl{E}_{t_1}} c_e = H(X_{\mcl{M}_{t_1}}),
\end{align}
provided that the feasible rate-flow region exists.
It is not hard to show that the capacity function $c(\Delta^{+}\mcl{S})$, $\forall \mcl{S} \subseteq \mcl{M}_{t_1}$ is submodular  (see Chapter~2 in~\cite{F05}).
Let us denote by $\partial \Psi$, the set of the boundaries $\partial \mbf{R}$ of a feasible rate-flow region:
\begin{align}
\partial \Psi \triangleq \{\partial \mbf{R} : R_e \leq c_e,~\forall e \in \mcl{E}_{t_1}\}
\end{align}
In~\cite{hoffman1958some} it was shown that
\begin{align}
\partial \Psi = B(c(\Delta^{+})). \label{base:eq}
\end{align}
From~\eqref{base:eq} and~\eqref{problem2} it follows that there exists a feasible CO rate vector iff
\begin{align}
B(c(\Delta^{+})) \cap B(g_{t_1}) \neq \emptyset. \label{base:intersection}
\end{align}
Problem~\eqref{base:intersection} is known as a \emph{common base problem}~\cite{F05} for which the solution exists if and only if
\begin{align}
c(\Delta^{+}\mcl{S})\geq g_{t_1}(\mcl{S}),~~\forall \mcl{S} \subseteq \mcl{M}_{t_1}.
\end{align}
This completes the proof of Lemma~\ref{lm:feasible}.
\end{proof}
To verify whether there exists an achievable \emph{MM}-rate vector it is necessary to check whether all $2^{|\mcl{M}_{t_1}|}$
inequalities in~\eqref{ineq:feasible} are satisfied. Verifying this is, in general, exponentially hard (in number of nodes).
However, due to the supermodularity of the function $g_{t_1}$, the existence of a common base, and thus the
feasibility of the multisource multicast problem, can be verified
in polynomial time\footnote{Complexity of the \emph{common base} algorithm proposed in~\cite{lawler1982computing} is $\mcl{O}(|\mcl{E}_{t_1}|^3)$} (see~\cite{lawler1982computing} and~\cite{F05}, Chapter 4). This algorithm also provides an achievable \emph{MM}-rate vector (given that it exists)
that belongs to the rate-flow region $\partial \mcl{R}_{t_1}$.

Extensions of the result of Lemma~\ref{lm:feasible} to the case with arbitrary number of clients is straightforward.
We just need to check if the inequalities~\eqref{ineq:feasible} are satisfied for all clients in $\mcl{T}$.
\begin{theorem}
For the multisource multicast problem over $G(\mcl{V},\mcl{E})$, with the capacity function $c$, there exists an achievable
MM-rate vector if and only if
\begin{align}
&c(\Delta^{+} \mcl{S})\geq H(X_{\mcl{S}}|X_{\mcl{M}_{t_i} \setminus \mcl{S}}), \\
&\forall \mcl{S}\subseteq \mcl{M}_{t_i},~\partial \Delta^{+}\mcl{S}\in \mcl{E}_{t_i},~\forall t_i\in \mcl{T}. \nonumber
\end{align}
\end{theorem}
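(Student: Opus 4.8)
The plan is to reduce the multiple-client statement directly to the single-client result already established in Lemma~\ref{lm:feasible}, since the theorem is essentially the simultaneous conjunction of that lemma over all clients. First I would recall from Section~\ref{sec:flowrate} that an achievable \emph{MM}-rate vector for the full problem~\eqref{problem2} is, by definition, a rate vector $\mbf{R}$ whose boundary $\partial \mbf{R}$ lies in the intersection $\partial \mcl{R}_{t_1} \cap \cdots \cap \partial \mcl{R}_{t_k}$ while respecting $R_e \leq c_e$ for all $e \in \mcl{E}$. The key structural observation is that the constraints defining each $\partial \mcl{R}_{t_i}$ involve only the links in the subgraph $\mcl{E}_{t_i}$ and the sources in $\mcl{M}_{t_i}$, so the feasibility question for each client can be examined on its own subgraph $G_{t_i}$.

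The forward direction is immediate: if an achievable \emph{MM}-rate vector $\mbf{R}$ exists for $G$, then its restriction to each subgraph $G_{t_i}$ is an achievable single-client rate vector satisfying $R_e \leq c_e$ on $\mcl{E}_{t_i}$, so by the necessity part of Lemma~\ref{lm:feasible} the inequalities $c(\Delta^{+}\mcl{S}) \geq H(X_{\mcl{S}}|X_{\mcl{M}_{t_i}\setminus \mcl{S}})$ must hold for every $\mcl{S} \subseteq \mcl{M}_{t_i}$ and every $t_i \in \mcl{T}$. For the converse, assuming the stated cut inequalities hold for all clients, Lemma~\ref{lm:feasible} guarantees that each individual rate-flow region $\partial \mcl{R}_{t_i}$ is nonempty under the capacity constraints, \emph{i.e.}, $B(c(\Delta^{+})) \cap B(g_{t_i}) \neq \emptyset$ on each $G_{t_i}$.

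The main obstacle I anticipate is the converse: nonemptiness of each individual region does not automatically yield nonemptiness of their intersection, because a rate assignment feasible for one client need not be feasible for another. The cleanest way around this is to observe that the clients are the sinks of the DAG and their subgraphs interact only through shared links; because each client must receive the full file and the per-link capacity constraints are common, one can argue that feasibility decouples across clients. Concretely, one shows that the network code (or the Slepian--Wolf random-binning scheme invoked in Section~\ref{sec:flowrate}) can be chosen so that the transmission $F_e$ on each link simultaneously serves every downstream client; since the decoding condition~\eqref{eq:decode} for each $t_i$ depends only on the messages on its incoming links and the sources in $\mcl{M}_{t_i}$, and each such sub-problem is independently feasible by Lemma~\ref{lm:feasible}, the same link rates $R_e \leq c_e$ work for all clients at once. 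This multicast property — that a single set of link transmissions can be decoded by all sinks — is exactly what makes the reduction valid, and it is the step that requires the most care.

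Thus the proof amounts to verifying that the decoding and capacity constraints for the joint problem factor as a client-wise conjunction, applying Lemma~\ref{lm:feasible} to each factor, and confirming that a common achievable scheme exists; I would state the forward and converse directions explicitly and point to the per-client application of Lemma~\ref{lm:feasible} as the engine of the argument.
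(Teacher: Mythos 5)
Your proposal is correct and takes essentially the same route as the paper, which disposes of the theorem in two sentences by saying the extension of Lemma~\ref{lm:feasible} is straightforward and one merely checks the inequalities for every client in $\mcl{T}$. In fact you are more careful than the paper: the subtlety you flag in the converse --- that per-client nonemptiness must be upgraded to a single set of link transmissions decodable by all sinks simultaneously, which is exactly the multicast/network-coding property underlying the $Z_e=\max_i R_e^{(t_i)}$ formulation of~\eqref{problem4} --- is left entirely implicit in the paper.
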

From~\cite{lawler1982computing}, \emph{the common base problem}, and hence the feasibility of the multisource multicast problem can be verified in $\mcl{O}(k\cdot |\mcl{E}|^3)$ time.

\section{Finding the Optimal \emph{MM}-Rates w.r.t.the \\ Linear Communication Cost} \label{sec:alg}

In this section we propose a polynomial time deterministic algorithm that solves optimization problem~\eqref{problem2}.
As in Section~\ref{sec:feasible}, we begin by considering a single client case, \emph{i.e.}, when $\mcl{T}=\{t_1\}$.
Then, by using a similar methodology as in~\cite{lun2006minimum}, we extend our solution to the arbitrary number of clients.

\subsection{Deterministic Algorithm for the Single Client Case} \label{sec:singleuser}

When $\mcl{T}=\{t_1\}$, then, the optimization problem~\eqref{problem2} can be written as
\begin{align}
&\min_{\mbf{R}} \sum_{e \in \mcl{E}_{t_1}} R_e, \label{problem3}  \\
&~~~\text{s.t.}~\partial \mbf{R} \in B(g_{t_1}),~~R_e\leq c_e,~~\forall e \in \mcl{E}_{t_1}, \nonumber
\end{align}
where the supermodular set function $g_{t_1}$ is defined in~\eqref{fcn:gt}.
Next, we introduce the dual set functions. For the function $g_{t_1}$, its dual function $f_{t_1}$ can be obtained as follows:
\begin{align}
f_{t_1}(\mcl{S})=g_{t_1}(\mcl{M}_{t_1})-g_{t_1}(\mcl{M}_{t_1} \setminus \mcl{S}),~~\forall \mcl{S}\subseteq \mcl{M}_{t_1}. \label{fcn:dual}
\end{align}
Applying formula~\eqref{fcn:dual}, we obtain $f_{t_1}=H(X_{\mcl{S}})$ which is a submodular function.
Moreover, in~\cite{F05} it was shown that $B(g_{t_1})=B(f_{t_1})$. Hence we can replace $B(g_{t_1})$
with $B(f_{t_1})$ in~\eqref{problem3}.

Optimization problem~\eqref{problem3} has a form of the \emph{minimum cost submodular flow problem} (see~\cite{F05} for formal definitions),
but with a few differences listed bellow.
\begin{enumerate}
\item In the submodular flow problem, function $g_{t_1}$ has to be defined over all vertices $\mcl{V}_{t_1}$ of graph $G_{t_1}$. However, in our case
      $g_{t_1}$ is a set function over the source vertices only.
\item In the submodular flow problem, $g_{t_1}(\mcl{V}_{t_1})$ must evaluate to $0$, whereas in our problem function
      $g_{t_1}$ is not defined for $\mcl{V}_{t_1}$.
\end{enumerate}

The first step of solving the problem~\eqref{problem3} efficiently involves verifying its feasibility. From
the common base algorithm we obtain an achievable \emph{MM}-rate vector that belongs to $B(f_{t_1})$ provided that $B(f_{t_1}) \neq \emptyset$.
Given any achievable \emph{MM}-rate vector that belongs to $B(f_{t_1})$, one can construct the auxiliary network over graph $G_{t_1}$\footnote{
See Chapter~III of~\cite{F05} for detailed explanation.}.
It can be verified that from this step onwards, we can apply min-cost submodular flow algorithm~\cite{F05} which involves finding negative
cycles of the auxiliary network, and updating the network accordingly along with the achievable \emph{MM}-rate vector.
Comparison between different minimum cost submodular flow algorithms is provided in~\cite{fujishige2000algorithms}.

\subsection{Deterministic Algorithm for the Multiple Client Case}

In this section we extend the results from the previous section to the case where the set $\mcl{T}$
contains arbitrary number of clients. Motivated by the results from~\cite{lun2006minimum}, the
optimization problem~\eqref{problem2} can be written as follows

\begin{align}
&\min_{\mbf{Z},\mbf{R}} \sum_{e \in \mcl{E}} \alpha_e Z_e, \label{problem4}  \\
&~~~~~~\text{s.t.}~Z_e \geq R_e^{(t_i)},~~\forall t_i \in \mcl{T},~\forall e \in \mcl{E}_{t_i}, \nonumber \\
&~~~~~~~~~~\partial \mbf{R}^{(t_i)} \in \partial \mcl{R}_{t_i},~R^{(t_i)}_e\leq c_e,~\forall e \in \mcl{E}_{t_i},~\forall t_i \in \mcl{T}, \nonumber
\end{align}
where $\partial \mcl{R}_{t_i}$ is defined in~\eqref{flow_region} for $i=1$. Equivalence between the optimization problems~\eqref{problem2}
and \eqref{problem4} follows from the fact that transmissions on graph $G$ have to be such that all clients in $\mcl{T}$
learn the file simultaneously.

Optimization problem~\eqref{problem4} has an exponential number of constraints, which makes it challenging to solve in polynomial time.
To obtain a polynomial time solution we consider the Lagrangian dual of problem~\eqref{problem4}.

\begin{align}
&\max_{\mbf{\Lambda}} \sum_{l=1}^k \varphi^{(t_i)} (\mbf{\Lambda}^{(t_i)}), \label{dual} \\
&~~~\text{s.t.}~\sum_{i=1}^k \lambda_e^{(t_i)}=\alpha_e,~\lambda_e^{(t_i)}\geq 0,~\forall t_i \in \mcl{T},~~\forall e \in \mcl{E}_{t_i}, \nonumber
\end{align}
where
\begin{align}
&\varphi^{(t_i)}(\mbf{\Lambda}^{(t_i)}) = \min_{\mbf{R}^{(t_i)}} \sum_{e \in \mcl{E}_{t_i}} \lambda_e^{(t_i)} R_e^{(t_i)}, \label{problem5} \\
&~~~~~~~~~~~~~~~~~~~\text{s.t.}~\partial \mbf{R}^{(t_i)}\in \partial \mcl{R}_{t_i},~~R^{(t_i)}_e\leq c_e,~~\forall e \in \mcl{E}_{t_i}. \nonumber
\end{align}

For any given $t_i \in \mcl{T}$, the objective function~\eqref{problem5} of the dual problem~\eqref{dual} can be computed in polynomial time
as pointed out in Section~\ref{sec:singleuser}. Hence, we can apply a subgradient method to solve the problem~\eqref{dual} in polynomial time.

Let $\mbf{\tilde{R}}^{(t_i)}$ be the optimal rate tuple w.r.t. the problem~\eqref{problem5} for some weight vector $\mbf{\Lambda}^{(t_i)}$, $t_i \in \mcl{T}$.
Starting with a feasible iterate $\mbf{\Lambda}[0]$ w.r.t. the optimization problem~\eqref{dual},
every subsequent iterate $\mbf{\Lambda}[n]$ can be recursively represented as an Euclidian projection of the vector
\begin{align}
\mbf{\Lambda}_e[n] = \mbf{\Lambda}_e[n-1] + \theta [n-1]\cdot \mbf{\tilde{R}}_e[n-1],~~\forall e \in \mcl{E} \label{iterate}
\end{align}
onto the hyperplane $\left\{ \mbf{\Lambda}_e \geq \mbf{0} | \sum_{i=1}^k \lambda_e^{(t_i)} = \alpha_i \right\}$,
where $\mbf{\tilde{R}}_e[n-1] = \{R^{(t_i)}_e[n-1] : \forall t_i \in \mcl{T}\}$.
The Euclidian projection ensures that every iterate $\mbf{\Lambda}[n]$ is feasible w.r.t. the optimization problem~\eqref{dual}.
By appropriately choosing the step size $\theta[n]$ in each iteration, it is guaranteed that the subgradient method
converges to the optimal solution of the problem~\eqref{dual}.

To recover the primal optimal solution from the iterates $\mbf{\Lambda}[n]$ we apply the results from~\cite{sherali1996recovery},
where at each iteration $n$ of \eqref{iterate}, the primal iterate is constructed as follows
\begin{align}
\mbf{\hat{R}}[n] = \sum_{j=1}^n \mu_j^{(n)} \mbf{\tilde{R}}[j], \label{pr_recovery}
\end{align}
where
\begin{align}
\sum_{j=1}^n \mu_j^{(n)}=1,~\mu_j^{(n)}\geq 0,~\text{for}~j=1,2,\ldots,n.
\end{align}
By carefully choosing the step size $\theta[n]$, $\forall n$ in \eqref{iterate} and the convex combination coefficients $\mu_j^{(n)}$,
$\forall j=1,\ldots,n$, $\forall n$,
it is guaranteed that \eqref{pr_recovery} converges to the minimizer of \eqref{problem2}, and therefore to the minimizer
of the original problem~\eqref{problem1}. In~\cite{sherali1996recovery}, the authors proposed several choices for $\{\theta[n]\}$ and
$\{\mu_j^{(n)}\}$ which lead to the primal recovery. Here we list some of them.
\begin{enumerate}
\item $\theta[n]=\frac{a}{b+cn}$, $\forall n$, where $a>0$, $b\geq 0$, $c>0$,  \\
      $\mu_j^{(n)}=\frac{1}{n}$, $\forall j=1,\ldots,n$, $\forall n$,
\item $\theta[n]=n^{-a}$, $\forall n$, where $0<a<1$, \\
      $\mu_j^{(n)}=\frac{1}{n}$, $\forall j=1,\ldots,n$, $\forall n$.
\end{enumerate}

It is only left to compute an optimal \emph{MM}-rate vector w.r.t the linear objective defined in~\eqref{problem2}.
Let $\mbf{R}^{*}$ and $\mbf{Z}^{*}$ be the optimal rate vectors of the problems~\eqref{problem2} and~\eqref{problem4}, respectively.
As we pointed out $\mbf{R}^{*}=\mbf{Z}^{*}$, where $\mbf{Z}^{*}$ can be computed from $\mbf{\hat{R}}[n]$
for a sufficiently large $n$, as follows
\begin{align}
Z_e^{*} = \max \left\{ \hat{R}_e^{(t_1)}[n], \hat{R}_e^{(t_2)}[n], \ldots, \hat{R}_e^{(t_k)}[n] \right\},~~\forall e \in \mcl{E}. \nonumber
\end{align}

\subsection{Code Construction for the Linear Source Model}

In this Section we briefly address the question of the optimal code construction for the finite linear source model.
We begin our analysis by considering the following example.
\begin{example}
Consider a system with $k=2$ clients and $l=4$ source nodes presented in Figure~\ref{fig:twousers}.
For convenience, we express the data vector as
$\mbf{W}=\left[
           \begin{array}{cccc}
             a & b & c & d \\
           \end{array}
         \right] \in \mathbb{F}_{q^n}^4$, where $a,b,c,d$ are independent uniform random variables in $\mathbb{F}_{q^n}$.
Each source node has the following observations $\mbf{X}_{m_1}=\{a,b\}$,
$\mbf{X}_{m_2}=\{b,c\}$, $\mbf{X}_{m_3}=\{c\}$, $\mbf{X}_{m_4}=\{d\}$. Let the objective function be $\sum_{e\in \mcl{E}} R_e$,
with the capacity constraints $c_e=4$, $\forall e \in \mcl{E}$. Applying the algorithm described in this section, we obtain
\begin{align}
R^{*}_1=R^{*}_4=0,~R^{*}_2=R^{*}_5=1,~R^{*}_3=2,~R^{*}_6=3,~R^{*}_7=4. \nonumber
\end{align}

\end{example}
\begin{figure}[h]
\begin{center}
\includegraphics[scale=0.5]{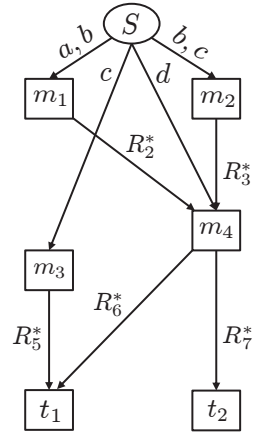}
\end{center}
\vspace{-0.2in}
\caption{Multicast network construction for the multisource multicast problem. We introduce a super node $S$ that posses
         all the data packets, and transmits them to the respective nodes.}\label{fig:codeconst}
\end{figure}

Now, we briefly explain how to design the actual transmissions of each source node. Starting from an optimal \emph{MM}-rate
vector, we first construct the corresponding multicast network by adding a super node $S$ that contains all individual
packets in $\mbf{W}$ (see Figure~\ref{fig:codeconst}). Then, we apply the algebraic network coding approach~\cite{KM03},
where the source matrix $\mbf{A}$ is given by
\begin{align}
\mbf{A}=\left[
          \begin{array}{cccc}
            \mbf{A}_{m_1}^T & \ldots & \mbf{A}_{m_l}^T & \mbf{0}_{N \times |\mcl{E}|} \\
          \end{array}
        \right].
\end{align}
Finally, the network code for the multisource multicast problem can be constructed in polynomial time from the algorithms
provided in~\cite{H05} which are based on a simultaneous transfer matrix completion.

In~\cite{KM03}, the authors derived
the transfer matrix $\mbf{M}(r_i)$ from the super-node $S$ to any receiver $t_i$, $i=1,\ldots,k$. It is a $|\mcl{E}| \times |\mcl{E}|$
matrix with the input vector $\mbf{W}$, and the output vector corresponding to the observations at the receiver $t_i$.
\begin{align}
\mbf{M}(t_i) = \mbf{A} (\mbf{I} - \mbf{\Gamma})^{-1} \mbf{B}(t_i),~~~i=1,\ldots,k,
\end{align}
where $\mbf{\Gamma}$ is adjacency matrix of the multicast network,
and $\mbf{B}(t_i)$ is an output matrix. For more details on how these matrices are constructed, we refer the interested reader to the reference \cite{KM03}.
Finally, given that $|\mathbb{F}_q|>k$,  the network code for the multisource multicast problem can be constructed in polynomial time from the algorithms
provided in~\cite{H05} which are based on a simultaneous transfer matrix completion\footnote{Complexity of the algorithm proposed in~\cite{H05} is
$\mcl{O}\left(k \cdot \left((|\mcl{E}|+N)^3\right) \log(|\mcl{E}|+N)\right)$.}.

\section{Conclusion}
In this work we study the linear cost multisource multicast problem, where each node in the network observes i.i.d. copies of the DMMS process. Assuming that nodes can communicate over the capacity constrained links of the directed acyclic graph,
the goal is for each client (sink of the graph), to learn the file, while minimizing a linear communication cost.
First, we show that the underlying optimization problem can be posed as a linear program with exponentially many rate-flow constraints.
Then, we provide the ``capacity flow'' conditions under which the multisource multicast problem is feasible. Applying the \emph{common base algorithm} one can construct a test that verifies feasibility in polynomial time. We show that the linear cost multisource multicast problem with single client and many nodes can be solved in polynomial time by applying algorithms for the minimum cost submodular flow problem.
Further, using the single client solution as a building block we show how one can solve a more general problem with arbitrary number of clients in polynomial time. For the special case of the finite linear source model, we propose a polynomial time algorithm that computes an explicit transmission scheme. 

\bibliographystyle{IEEEtran}

\end{document}